\newtheorem{theorem}{Theorem}
\newtheorem{proposition}{Proposition}
\newtheorem{conjecture}{Conjecture}
\newcommand{\prob}{\mathrm{Prob}}
\author{
Tarik Kaced\thanks{LIF de Marseille, Univ. Aix--Marseille}
\and  Andrei Romashchenko\thanks{LIF de Marseille, CNRS \& Univ. Aix--Marseille on leave from IITP of RAS, Moscow}
}
\title{On essentially conditional information inequalities}
\begin{document}

\maketitle

\begin{abstract}

In 1997, Z.~Zhang and R.W.~Yeung found the first example of a conditional
information inequality in four variables that is not ``Shannon-type''. This
linear inequality for entropies is called conditional (or constraint) since it
holds only under condition that some linear equations are satisfied for the
involved entropies. Later, the same authors and other researchers discovered
several unconditional information inequalities that do not follow from
Shannon's inequalities for entropy.

 In this paper we show that some non Shannon-type 
conditional inequalities are ``essentially'' conditional, i.e., they cannot be extended 
to any unconditional inequality. We prove one new essentially conditional information
inequality for Shannon's entropy and  discuss conditional information
inequalities for  Kolmogorov complexity.

\end{abstract}

\section{Introduction}

Let $(X_1,\ldots,X_n)$ be jointly distributed random variables on a finite
domain.  For this collection of random variables there are $2^n-1$  non-empty
subsets and for each subset we have a value of Shannon's entropy. We call this
family of entropies the \emph{entropy profile} of the distribution
$(X_1,\ldots,X_n)$.  Thus, to every $n$-tuple of jointly distributed random
variables there corresponds its entropy profile which is a vector of values in
$\mathbb{R}^{2^n-1}$. We say that a point in $\mathbb{R}^{2^n-1}$ is
\emph{entropic} if it is a vector of entropies for some distribution.

All entropic points satisfy different \emph{information inequalities} that
characterize the range of all entropies for $X_i$. The most known and
understood are so-called \emph{Shannon-type} inequalities, i.e., linear
combinations of basic inequalities of type $I(U:V|W)\ge0$, where $U,V,W$ are
any (possibly empty) subsets of the given family of random variables.

In 1998 Z.~Zhang and R.W.~Yeung proved the first example of  an unconditional
\emph{non Shannon-type} information inequality, 
which was a linear inequality for  entropies of $(X_1,X_2,X_3,X_4)$ that cannot be represented
as a combination of basic inequalities \cite{zy98}. Since this seminal paper of Zhang and Yeung was published, 
many
(in fact, \emph{infinitely many}) non Shannon-type linear information inequalities were proven, see, e.g., 
\cite{mmrv,matus,matus-inf,six-inequalities,projections}. These new inequalities were applied  
in problems of network coding \cite{network}, secret sharing \cite{secret-sharing}, 
etc. 
However, these inequalities and their `physical meaning'
are still not very well understood.

In this paper we discuss \emph{conditional} (constraint) information inequalities. That is, we are
interested in linear information inequalities that are true only given some linear constraint for entropies.
Trivial examples of conditional inequalities can be easily derived from (unconditional) basic inequalities,
e.g., if $H(X_1)=0$ then $H(X_1,X_2)\le H(X_2)$. However, some conditional inequalities cannot be obtained
as a corollary of Shannon-type inequalities. The first example of  a nontrivial conditional inequality
was proven in \cite{zy97} (even before the first example of an unconditional non Shannon-type
inequality):
\begin{eqnarray} \label{cond-ineq-zy97}
\begin{split}
\lefteqn{ \mbox{if }I(A:B)=I(A:B|C)=0, \mbox{ then}}\\
&&I(C:D)\le I(C:D|A)+I(C:D|B)
\end{split}
\end{eqnarray}
Another conditional inequality
\begin{eqnarray} \label{cond-ineq-matus}
\begin{split}
\lefteqn{ \mbox{if }I(A:B|C)=I(B:D|C)=0, \mbox{ then}}\\
&&I(C:D)\le I(C:D|A)+I(C:D|B)+ I(A:B) 
\end{split}
\end{eqnarray}
was proven by F.~Mat\'u\v{s} in~\cite{matus99}.

In~\cite{mmrv} it was conjectured that~(\ref{cond-ineq-zy97}) can be extended to some
unconditional inequality
\begin{eqnarray}\label{ineq-conjecture-mmrv}
\begin{split}
 \lefteqn{  I(C:D)\le I(C:D|A)+I(C:D|B) + }\\
 &&    \rule{40pt}{0pt}  + \kappa ( I(A:B)+I(A:B|C) )
 \end{split}
\end{eqnarray}
(for some constant $\kappa>0$). In this paper we prove that this conjecture is \emph{wrong}:
for any coefficient $\kappa$, inequality (\ref{ineq-conjecture-mmrv}) is not true for some distributions.
So, inequality~(\ref{cond-ineq-zy97}) is ``essentially conditional''; it cannot be extended  to an unconditional
information inequality. A similar statement can be proven
for~(\ref{cond-ineq-matus}).

In this paper  we also prove one new conditional linear inequality that cannot be extended to any unconditional inequality.  So, now we have three examples of essentially conditional linear information inequality.  

It should be noticed that  these conditional information inequalities are proven for the set of \emph{entropic} points. 
(These three inequalities involve $4$-tuples of random variables, so technically they are some statements about the set of the entropic points in $\mathbb{R}^{15}$.) But  it is not  know whether they hold for the \emph{almost entropic} points (i.e., for the points $\mathbf{x}\in\mathbb{R}^{15}$ such that for every $\varepsilon>0$ there exists an entropic point $\mathbf{y}\in\mathbb{R}^{15}$ at the distance less than $\varepsilon$ from $\mathbf{x}$).  
In fact, the set of the almost entropic points is a nice  and interesting object to study.  In some sense, the almost entropic points make a more natural object than the entropic points; 
 e.g., 
for every $n$ the set of all almost entropic points for $n$-tuples of random variables is a closed convex cone (while 
 for  $n>2$ the set of all entropic points is not closed and not a cone). 
 We recall that some \emph{piecewise linear} conditional information inequality proven in~\cite{matus-piecewise} holds only for the entropic but not for the almost entropic points. 
 So there is an interesting open question: Do inequalities  (\ref{cond-ineq-zy97}), (\ref{cond-ineq-matus}), and the inequality from Theorem~\ref{th-cond-ineq}
hold for the  \emph{almost entropic} points?

It is known that the class of unconditional linear information inequalities are the same
for Shannon's entropy and for Kolmogorov complexity.
The situation with conditional inequalities is more complicated:
the known technique used to prove constraint information inequalities for Shannon's
entropy cannot be directly adapted for Kolmogorov complexity. In fact, it is not even
clear how to formulate Kolmogorov's version of constraint inequalities. However, we prove  
for Kolmogorov complexities some  counterpart of inequality~(\ref{cond-ineq-zy97}); 
this inequality holds only for some special tuples of words.

The paper is organized as follows.
In Section~\ref{sec3} we use the technique from \cite{zy97} and prove one new conditional 
information inequality. 
In Section~\ref{sec4} we prove that this new inequality as well as~(\ref{cond-ineq-zy97}) 
and~(\ref{cond-ineq-matus})
cannot be extended to any unconditional inequalities. 
In Section~\ref{sec-kolmogorov}  we prove some version  of conditional
inequality for Kolmogorov complexities.

\subsection{Corrected errors}

Some errors were found in the previous versions of the paper:

\begin{itemize}
\item The statement of Theorem~4(b) in  \texttt{arXiv:1103.2545v1} was wrong.
\item The proof of Theorem~4 in  \texttt{arXiv:1103.2545v2} and  \texttt{arXiv:103.2545v3}
and in the proceedings of ISIT-2011 was wrong. Note that the statement of this theorem
(it claims that the  cone of asymptotically entropic points for 4 random variables is not polyhedral) 
is true, see~\cite{matus-inf}.
However, the ``new proof'' of this result suggested in our paper was not valid since the conditional inequalities
under consideration are proven only for the entropic but not for the almost entropic points. We thank F.~Mat\'{u}\v{s},
who pointed out this mistake.
\end{itemize}

\section{Nontrivial conditional information inequalities} \label{sec3}

The very first example of an inequality that does  not follow
from basic (Shannon type) inequalities was the following
result of Z.~Zhang and R.~W.~Yeung:
\begin{theorem}[Zhang--Yeung, \cite{zy97}]\label{th-zy97}
For all random variables $A,B,C,D$, if $ I(A:B|C) = I(A:B) = 0$ then
$$
I(C:D)\le I(C:D|A) + I(C:D|B).
$$
\end{theorem}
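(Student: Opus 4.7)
I would prove Theorem~\ref{th-zy97} using the auxiliary-variable (``copy'') technique that is characteristic of non-Shannon-type information inequalities. The argument splits into an algebraic reduction and a Shannon-type derivation in an enlarged tuple.

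First, a routine expansion of Shannon entropies yields the universal identity
\[
I(C:D|A)+I(C:D|B)-I(C:D)-I(C:D|AB) = -I(A:B)+I(A:B|C)+I(A:B|D)-I(A:B|CD),
\]
both sides of which are equal to $-I(A:B:C:D)$. Under the hypotheses $I(A:B)=I(A:B|C)=0$, proving the theorem is therefore equivalent to establishing the reduced inequality
\[
I(A:B|CD)\le I(C:D|AB)+I(A:B|D)
\]
under the same constraints. This reduced inequality is itself not Shannon-type.

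To obtain it, I would enlarge the probability space by an auxiliary random variable $\tilde C$ defined through the conditional kernel $\Pr(\tilde c\mid a,b,c,d):=\Pr(c\mid a,b)$. This makes $(A,B,\tilde C)$ equidistributed with $(A,B,C)$ while forcing $\tilde C$ to be conditionally independent of $(C,D)$ given $(A,B)$. In the enlarged tuple the hypotheses propagate as $I(A:B\mid\tilde C)=0$, and the explicit product factorization
\[
P(a,b,c,\tilde c)=P(c)P(\tilde c)\cdot\frac{P(a|c)P(a|\tilde c)}{P(a)}\cdot\frac{P(b|c)P(b|\tilde c)}{P(b)}
\]
forced by the two constraints further gives $I(A:B\mid C\tilde C)=0$; the copy contributes $I(\tilde C:D\mid AB)=I(\tilde C:C\mid AB)=0$.

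I would then combine basic Shannon-type inequalities in the five-variable tuple $(A,B,C,\tilde C,D)$ with the identities above to derive the reduced inequality. The auxiliary $\tilde C$ is a ``shadow copy'' of $C$ that retains the same information about $(A,B)$ but is decoupled from $D$ given $(A,B)$; this decoupling is precisely what allows a Shannon combination on five variables to yield a bound that is non-Shannon-type on the four-variable marginal.

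The main obstacle is the bookkeeping in this last step: one must select the specific combination of basic Shannon inequalities whose non-Shannon residue cancels exactly against the identities produced by the copy construction, leaving the reduced inequality. This is the only genuinely non-trivial part of the proof, and it is where the two hypotheses are used in an essential and non-removable way---consistent with the result, proved later in Section~\ref{sec4}, that the inequality admits no unconditional extension.
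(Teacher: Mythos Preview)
The paper does not give its own proof of Theorem~\ref{th-zy97}; it attributes the result to~\cite{zy97} and then proves the new Theorem~\ref{th-cond-ineq} by what it calls ``a similar method''. That method (the Zhang--Yeung technique, reproduced in the paper for Theorem~\ref{th-cond-ineq}) is quite different from yours: rather than copying $C$ over $(A,B)$, one constructs a new four-tuple $(\tilde A,\tilde B,\tilde C,\tilde D)$ in which $\tilde A$ and $\tilde B$ are forced to be conditionally independent given $(\tilde C,\tilde D)$, while the marginals $(\tilde A,\tilde C,\tilde D)\sim(A,C,D)$ and $(\tilde B,\tilde C,\tilde D)\sim(B,C,D)$ are preserved. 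One then writes $H(\tilde A,\tilde B,\tilde C,\tilde D)$ in two ways and uses the constraints $I(A:B)=I(A:B|C)=0$ to close the resulting chain of inequalities. No search over Shannon combinations on an enlarged tuple is required; the argument is a short direct computation.

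Your plan, by contrast, has a genuine gap precisely at the point you yourself flag. The algebraic reduction via $-I(A{:}B{:}C{:}D)$ and the factorisation of $P(a,b,c,\tilde c)$ are correct, and the identities $I(A:B|\tilde C)=I(A:B|C\tilde C)=I(\tilde C:CD|AB)=0$ do hold. But you then say one must ``select the specific combination of basic Shannon inequalities'' and call this ``the only genuinely non-trivial part of the proof''---and you do not carry it out. It is not evident that the identities you have collected suffice: the reduced target $I(A:B|CD)\le I(C:D|AB)+I(A:B|D)$ mixes $C$ and $D$, whereas your copy $\tilde C$ is decoupled from $(C,D)$ only through $(A,B)$; you have no handle on quantities like $I(A:B|\tilde C D)$ or $H(C\tilde C D)$ that would naturally appear in a five-variable Shannon combination. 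Until that combination is actually exhibited, this is a proof outline rather than a proof. If you want a copy-lemma argument, the route in~\cite{zy97} (mirrored in the paper's proof of Theorem~\ref{th-cond-ineq}) is more economical: decouple $A$ and $B$ over $(C,D)$ instead of duplicating $C$ over $(A,B)$.
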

With the same technique F.~Mat\'u\v{s} proved another conditional inequality~(\ref{cond-ineq-matus}), see~\cite{matus99}. Using a  similar method,  
we prove one new conditional inequality:
\begin{theorem}\label{th-cond-ineq}
For all random variables $A,B,C,D$ if $$H(C|A,B) = I(A:B|C) = 0,$$  then
$
I(C:D)\le I(C:D|A) + I(C:D|B)+ I(A:B).
$
\end{theorem}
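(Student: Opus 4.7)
My plan is to follow the Zhang--Yeung scheme of Theorem~\ref{th-zy97}, as also used by Matus for~(\ref{cond-ineq-matus}): introduce an auxiliary random variable and then derive the non-Shannon-type conditional inequality from Shannon-type inequalities on the enlarged distribution.

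First I would unpack the hypotheses and rewrite the target. From $I(A:B\mid C)=0$ we get $H(A,B,C)=H(A,C)+H(B,C)-H(C)$, and from $H(C\mid A,B)=0$ we get $H(A,B,C)=H(A,B)$. Combining,
\[
H(A,B)=H(A,C)+H(B,C)-H(C),\qquad\text{i.e.}\qquad I(A:B)=H(C)-H(C\mid A)-H(C\mid B).
\]
Substituting this identity, the claim becomes
\[
I(C:D)-I(C:D\mid A)-I(C:D\mid B)\;\le\; H(C)-H(C\mid A)-H(C\mid B).
\]
In this form the theorem is a quantitative strengthening of Theorem~\ref{th-zy97}, which corresponds to the boundary case where the right-hand side vanishes. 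Equivalently, writing the joint as $p(a,b,c,d)=p(c)\,p(a\mid c)\,p(b\mid c)\,p(d\mid a,b)$ (which is forced by the two hypotheses), one sees that the only non-trivial content concerns how $D$ interacts with $(A,B,C)$.

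Next I would introduce a fresh copy $D'$ of $D$ defined by setting $p(d'\mid a,b,c,d):=p(d\mid a,b)$, so that $(A,B,C,D')$ has the same law as $(A,B,C,D)$ and $D\perp D'\mid A,B$; since $C=f(A,B)$, this is also $D\perp D'\mid A,B,C$. This is the direct analogue of the auxiliary copies used in~\cite{zy97,matus99}. The enlarged distribution has the product structure
\[
p(a,b,c,d,d')=p(c)\,p(a\mid c)\,p(b\mid c)\,p(d\mid a,b)\,p(d'\mid a,b),
\]
and each marginal of the form $(A,B,C,D^{(i)})$ still satisfies the hypotheses of the theorem. I would then apply a combination of submodularity and chain-rule inequalities to the entropies of $(A,B,C,D,D')$ and its subcollections, and match the resulting Shannon-type linear combination to the reformulated inequality above, so that all terms containing $D'$ cancel and only entropies of $(A,B,C,D)$ remain.

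The main obstacle is precisely this cancellation step: choosing the linear combination of Shannon inequalities on the enlarged distribution so that every entropy involving $D'$ disappears, leaving exactly $H(C)-H(C\mid A)-H(C\mid B)-I(C:D)+I(C:D\mid A)+I(C:D\mid B)\ge 0$. This is the creative heart of the Zhang--Yeung/Matus technique; the hypothesis $H(C\mid A,B)=0$ is what makes the cancellation work, because it collapses $H(\cdot\mid A,B,C)$ onto $H(\cdot\mid A,B)$ and thereby identifies the auxiliary conditioning needed to line up the $D$- and $D'$-terms.
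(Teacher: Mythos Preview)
Your proposal is not a proof: the only nontrivial step---producing a linear combination of Shannon inequalities on $(A,B,C,D,D')$ that collapses to the desired bound---is explicitly left undone, and you label it ``the main obstacle.'' A plan whose sole creative step is an acknowledged obstacle is a sketch, not an argument.

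The plan also misidentifies the technique. What \cite{zy97} and \cite{matus99} do (and what the paper does here) is \emph{not} the copy lemma of \cite{zy98}: no fresh conditionally independent clone of a variable is adjoined. Instead the joint law of $(A,B,C,D)$ is \emph{replaced} by a new law $(\tilde A,\tilde B,\tilde C,\tilde D)$ in which the marginals $(\tilde A,\tilde C,\tilde D)$ and $(\tilde B,\tilde C,\tilde D)$ agree with the originals but $\tilde A\perp\tilde B\mid(\tilde C,\tilde D)$ is forced. One then compares the exact expansion
\[
H(\tilde A,\tilde B,\tilde C,\tilde D)=H(C,D)+H(A\mid C,D)+H(B\mid C,D)
\]
with the trivial upper bound
\[
H(\tilde A,\tilde B,\tilde C,\tilde D)\le H(D)+H(A\mid D)+H(B\mid D)+H(\tilde C\mid\tilde A,\tilde B).
\]
The two hypotheses enter only at the very end, to show $H(\tilde C\mid\tilde A,\tilde B)=0$: the condition $I(A{:}B\mid C)=0$ guarantees that any pair $(\tilde a,\tilde b)$ occurring in the new law already occurs in the old one, and $H(C\mid A,B)=0$ then pins down $\tilde C$. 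That single vanishing term is what turns the tautology into the theorem.

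Your construction puts the auxiliary randomness on the wrong side. The target inequality is linear in $D$ (each term contains $D$ at most once), so duplicating $D$ produces no genuinely new cross-terms to exploit; the inequalities $I(D{:}D'\mid X)\ge 0$ you would naturally write down bound quantities like $H(C,D,D')$ from \emph{above}, whereas the combination you need would require a \emph{lower} bound there. By contrast, the paper perturbs the $(A,B)$-structure, which is where all the content of the hypotheses sits. I do not see how to complete your cancellation, and I believe it cannot be completed along the line you describe.
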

\begin{proof}  The argument consists of two steps: \emph{enforcing conditional independence}
and \emph{elimination of conditional entropy}.
Let us have a joint distribution of random variables $A,B,C,D$. The first trick of the argument
is a special transformation of this  distribution:  we keep 
the same distribution of the triples $(A,C,D)$ and $(B,C,D)$ but make $A$ and $B$ independent conditional
on $(C,D)$. Intuitively it means that we first choose at random (using the old distribution) values of $C$ and $D$;
then given fixed values of $C,D$ we independently choose at random $A$ and $B$ (the conditional 
distributions of $A$ given $(C,D)$ and $B$ given $(C,D)$ are the same as in the original distribution). 

More formally, we construct a new distribution $(\tilde A,\tilde B,\tilde C, \tilde D)$.
If $\prob[A=a, B=b, C=c, D=d]$ is the original distribution, then the new distribution  is
defined as follows:
\begin{eqnarray*}
 \lefteqn{ \prob [\tilde A=a, \tilde B=b, \tilde C=c, \tilde D=d] =} \\ 
 && \frac{\prob [A=a,C=c,D=d]\cdot \prob [B=b,C=c,D=d]}{\prob [C=c,D=d]}
\end{eqnarray*}
(with the convention $\frac00=0$ for all values $a,b,c,d$ of the four random variables).
From the construction ($\tilde A$ and $\tilde B$ are independent given $\tilde C,\tilde D$) it follows that
 $$
 H(\tilde A,\tilde B,\tilde C,\tilde D) = H(\tilde C, \tilde D) + H(\tilde A|\tilde C,\tilde D) + H(\tilde B|\tilde C,\tilde D)
 $$
Since $(\tilde A,\tilde C,\tilde D)$ and $(\tilde B,\tilde C,\tilde D)$ have exactly the same distributions as the original 
$(A,C,D)$ and $(B,C,D)$ respectively, we have
 $$
 H(\tilde A,\tilde B,\tilde C,\tilde D) = H(C,D) + H(A|C,D) + H(B|C,D)
 $$
The same entropy can be bounded in another way:
$$
H(\tilde A,\tilde B,\tilde C,\tilde D) \le H(\tilde D) + H(\tilde A|\tilde D)+ H(\tilde B|\tilde D)+ H(\tilde C | \tilde A,\tilde B)
$$
Notice that the entropies $H(\tilde D)$, $H(\tilde A|\tilde D)$ and $H(\tilde B|\tilde D)$ are equal 
to $H(D)$, $H(A|D)$ and $H(B|D)$
respectively (we again use the fact that $\tilde A,\tilde D$ and $\tilde B,\tilde D$ have the same distributions as $A,D$
and $B,D$ respectively in the original distribution). Thus, we get
\begin{eqnarray*}
 \lefteqn{ H(C,D) + H(A|C,D) + H(B|C,D)\le }  \\
 & & H(D) + H(A|D)+ H(B|D)+ H(\tilde C | \tilde A,\tilde B)
\end{eqnarray*}
It remains to estimate the value $H(\tilde C | \tilde A,\tilde B)$. We will show that it is zero (and this is the
second trick used in the argument).

Here we will use the two conditions of the theorem. We say that some values $a,c$ 
(respectively, $b,c$ or $a,b$)
are \emph{compatible} if in the original distribution these values can appear together, i.e., 
$\prob[A=a,C=c]>0$ (respectively, $\prob[B=b,C=c]>0$ or $\prob[A=a,B=b]>0$). Since $A$ and $B$ are independent
given $C$, if some values $a$ and $b$ (of $A$ and $B$) are compatible with the same value $c$ of $C$,
then these $a$ and $b$ are compatible with each other.

In the new distribution $(\tilde A,\tilde B,\tilde C,\tilde D)$ values of $\tilde A$ and $\tilde B$ 
are compatible with each other \emph{only if}
they are compatible with one and the same value of $\tilde C$; 
hence, these values must be also compatible with each other in the original
distribution $(A,B)$. Further, since $H(C|A,B)=0$, for each pair of compatible values of $A,B$ there exists 
only one value of $C$. Thus, for each pair of values $(\tilde A,\tilde B)$ 
with probability $1$ there exists only one value of $\tilde C$.
In a word, in the new distribution $H(\tilde C|\tilde A,\tilde B)=0$.  

Summarizing our arguments, we get 
\begin{eqnarray*}
 \lefteqn{  H(C,D) + H(A|C,D) + H(B|C,D) \le }\\
 && H(D) + H(A|D)+ H(B|D),
\end{eqnarray*}
which is equivalent to 
$$
 I(C:D)\le I(C:D|A) + I(C:D|B) + I(A:B).
$$
\end{proof}

The proof of Theorem~\ref{th-cond-ineq} presented above is based implicitly on non-negativity of
the Kullbak--Leibler divergence. The same idea can be presented in a slightly different form,
with an explicit reference to the Kullbak--Leibler inequality. The argument is almost the same
as the  proof of the second part of  Proposition~2.1 in\cite{matus99}:
\begin{proof}[Second version of the proof of Theorem~\ref{th-cond-ineq}:]
Let $p[a,b,c,d]$ be a distribution of $(A,B,C,D)$ such that $H(C|AB) = I(A:B|C)=0$. 
With some abuse of notations for we denote projections of this distribution as 
$$
p[a,c,d]=\prob[A=a,C=c,D=d],\ 
p[a,d]=\prob[A=a,D=d],\mbox{ etc.}$$ 

We construct two new distributions, $\tilde p [a,  b, c, d] = \prob[\tilde A=a, \tilde B=b, \tilde C=c, \tilde D=d]$, and
$\hat p [a,  b, c, d] = \prob[\hat A=a, \hat B=b, \hat C=c, \hat D=d]$. We define them as follows:
$$
 \tilde p [a,  b, c, d] =
\frac{p[a,c,d]\cdot p[b,c,d]}{p[c,d]}
$$
and
$$
 \hat p [a, b, c, d] = \left\{
 \begin{array}{cl}
 \frac{p[a,d]\cdot p[b,d]}{p_{D}[d]}, & \mbox{if } p [a, b, c] >0,\\
 0,&  \mbox{otherwise}.
 \end{array}
\right.
$$
Since $I(A:B|C)=0$,  the condition $p [a, b, c] >0$ is true if and only if
$p[a, c] >0$ and $p [b, c] >0$.

Then we use non-negativity of the Kullback--Leibler divergence:
\begin{eqnarray*}
0\le D(\hat p || \tilde p) =
\sum \frac{p[a,c,d]\cdot p[b,c,d]}{p[c,d]} \cdot
\log \frac{p[a,c,d]\cdot p[b,c,d]\cdot p[d]}{p[c,d]\cdot p[a,d]\cdot p[b,d]}
\end{eqnarray*}
(the sum over all values $a,b,c,d$ such that $p [a, b, c] >0$).
It follows immediately that
$$
0\le H(A,D)+ H(B,D) + H(C,D) -H(A,C,D)-H(B,C,D)- H(D).
$$
Now we add  the values $I(B:C|A)=H(A,C)+H(B,C)-H(A,B,C)-H(C)$ and
$H(C|A,B)=H(A,B,C)-H(A,B)$ to the right-hand side of the inequality
(both these values are equal to $0$ for our distribution). We obtain
$$
0\le I(C:D|A)+ I(C:D|B)  + I(A:B) - I(C:D),
$$
and we are done.
\end{proof}

\section{Conditional inequalities that cannot be extended to any unconditional inequalities}
\label{sec4}

In~\cite{mmrv} it was conjectured that the conditional inequality from Theorem~\ref{th-zy97}
is  a corollary of some \emph{unconditional} information inequality (which was not discovered yet): 
\begin{conjecture}[\cite{mmrv}]\label{conjecture-1}
For some constant $\kappa>0$ inequality~(\ref{ineq-conjecture-mmrv})
is true for all random variables $A,B,C,D$. 
\end{conjecture}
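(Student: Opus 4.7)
The plan is to refute Conjecture~1 by constructing, for every prescribed $\kappa > 0$, an explicit joint distribution of $(A, B, C, D)$ on which inequality~(\ref{ineq-conjecture-mmrv}) fails. In fact, I would aim for a one-parameter family of distributions $(A_\varepsilon, B_\varepsilon, C_\varepsilon, D_\varepsilon)$ with $\varepsilon > 0$ small, for which the ratio
$$
R(\varepsilon) := \frac{I(C_\varepsilon : D_\varepsilon) - I(C_\varepsilon : D_\varepsilon \mid A_\varepsilon) - I(C_\varepsilon : D_\varepsilon \mid B_\varepsilon)}{I(A_\varepsilon : B_\varepsilon) + I(A_\varepsilon : B_\varepsilon \mid C_\varepsilon)}
$$
grows without bound as $\varepsilon \to 0$. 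For any given $\kappa$, the conjecture then fails at a sufficiently small $\varepsilon$.

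The key structural observation is a mismatch in the orders of vanishing. The two constraint defects $I(A:B)$ and $I(A:B \mid C)$ are non-negative smooth functions of the joint probability vector and vanish on the constraint surface $\mathcal{S} := \{p : I(A:B)=I(A:B\mid C)=0\}$; hence their gradients automatically vanish at every point of $\mathcal{S}$, and along any smooth path leaving $\mathcal{S}$ the denominator of $R(\varepsilon)$ is of order $\varepsilon^2$. The numerator, by contrast, is a signed quantity: Theorem~\ref{th-zy97} only forces it to be non-positive on $\mathcal{S}$. If one chooses a base point $p^* \in \mathcal{S}$ that is a \emph{tight} instance of Theorem~\ref{th-zy97} (so the numerator also vanishes at $p^*$) and perturbs along a direction in which the numerator has strictly positive first-order change, then the numerator is of order $\varepsilon$. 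The ratio $R(\varepsilon)$ is then of order $1/\varepsilon$ and hence unbounded.

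The main obstacle is producing a suitable base point $p^*$ together with a good perturbation direction $v$. A natural place to look for $p^*$ is among the equality cases in the proof of Theorem~\ref{th-zy97}: any distribution for which all Shannon-type inequalities invoked there hold with equality is a candidate. Given such $p^*$, one then needs a direction $v$ along which the first-order change of $I(C:D) - I(C:D\mid A) - I(C:D\mid B)$ is strictly positive; the corresponding first-order change of $I(A:B) + I(A:B \mid C)$ is automatically zero since both are critical at $p^*$. This reduces to checking that the numerator is not critical at $p^*$ as a function on the full ambient simplex, which is a straightforward finite-dimensional Lagrange-multiplier computation once a candidate $p^*$ is in hand. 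The delicate point is ensuring that such tight $p^*$ in the interior of the simplex actually exists and is not locked into $\nabla G(p^*)=0$ by symmetry or other degeneracy.

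Once the family is constructed, the refutation is immediate: for each given $\kappa$, pick $\varepsilon$ so small that $R(\varepsilon) > \kappa$. The same plan should apply to the analogous conjectures for extending~(\ref{cond-ineq-matus}) and the inequality of Theorem~\ref{th-cond-ineq} to unconditional form, starting from tight instances of those respective constraints.
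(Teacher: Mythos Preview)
Your strategy --- a one-parameter family in which the numerator of $R(\varepsilon)$ is $\Theta(\varepsilon)$ while the denominator is $O(\varepsilon^2)$ --- is exactly the paper's. The paper, however, bypasses the abstract existence discussion and simply writes down an explicit five-atom distribution on binary $(A,B,C,D)$ for which $I(C:D|A)$, $I(C:D|B)$, and $I(A:B|C)$ vanish \emph{identically} in $\varepsilon$; the only computations needed are then $I(C:D)=\Theta(\varepsilon)$ and $I(A:B)=O(\varepsilon^2)$.

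One caveat on your gradient heuristic: the claim that the denominator is automatically $O(\varepsilon^2)$ along any smooth path leaving $\mathcal S$ presumes that $I(A:B|C)$ is smooth at the base point, which can fail when the perturbation introduces a new value of $C$ with probability $\Theta(\varepsilon)$ --- the contribution $P(C=c)\,I(A:B|C=c)$ from that new atom may well be first order. The paper's base point is in fact on the boundary of the simplex ($C$ is deterministic at $\varepsilon=0$), so your search for an \emph{interior} tight instance of Theorem~\ref{th-zy97} is heading in the wrong direction; the construction works precisely because $I(A:B|C)$ is kept exactly zero for all $\varepsilon$, while the $O(\varepsilon^2)$ bound on $I(A:B)$ is legitimate since the $(A,B)$-marginal is fully supported and independent at $\varepsilon=0$. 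In short, your plan is correct, but the ``delicate point'' you flag is resolved by a two-line explicit example rather than by any Lagrange-multiplier or existence argument.
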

\noindent
Obviously, if such an inequality could be proven, it would imply the statement of Theorem~\ref{th-zy97}.
Similar conjectures could be formulated for~(\ref{cond-ineq-matus}) and the conditional inequality from Theorem~\ref{th-cond-ineq}.
We prove that these conjectures are false, i.e.,  these three conditional inequalities cannot
be converted into unconditional  inequalities:

\begin{theorem}\label{th-not-unconditional}
(a) For any $\kappa$ the inequality~(\ref{ineq-conjecture-mmrv})
is not true for some distributions $(A,B,C,D)$.

(b) For any $\kappa$ the inequality
\begin{eqnarray}\label{false-conj-2}
\begin{split}
\lefteqn{   I(C:D)\le I(C:D|A)+I(C:D|B) + I(A:B)+}\\
&&  \rule{80pt}{0pt} + \kappa ( I(A:B|C) + H(C|A,B))
 \end{split}
\end{eqnarray} 
is not true for some distributions $(A,B,C,D)$.

(c) For any $\kappa$ the inequality
\begin{eqnarray} \label{false-conj-3}
\begin{split}
\lefteqn{   I(C:D)\le I(C:D|A)+I(C:D|B) + I(A:B)+}\\
&&  \rule{80pt}{0pt} + \kappa ( I(A:B|C) + H(B:D|C))
\end{split}
\end{eqnarray}
is not true for some distributions $(A,B,C,D)$. Thus, (\ref{cond-ineq-matus}) cannot be extended
to an unconditional inequality.
\end{theorem}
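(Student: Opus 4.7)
The plan is to falsify each of (a), (b), (c) by producing, for an arbitrarily prescribed $\kappa>0$, a joint distribution $(A,B,C,D)$ on which the claimed inequality fails. Each of the three candidate inequalities has the form
\[
L(p)\ \le\ \kappa \cdot P(p),
\]
where $L$ is the left-hand side minus the part of the right-hand side that does not involve $\kappa$, and $P$ is the non-negative ``penalty'' expression multiplying $\kappa$. Thus it suffices, in each case, to exhibit a one-parameter family of distributions $\{p_t\}_{t>0}$ with
\[
\liminf_{t\to 0}\,L(p_t)\ >\ 0\quad\text{while}\quad P(p_t)\to 0 \text{ as } t\to 0,
\]
for then choosing $t$ small enough defeats any fixed $\kappa$. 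Since all three parts share the same template (only the penalty terms differ), I would develop the argument first for (a) and then indicate the minor modifications for (b) and (c).

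For (a), the family $\{p_t\}$ I would try is a mixture (or more general perturbation) $p_t=(1-t)q_0+t\, q_1$ of a carefully chosen pair of distributions, with disjoint supports obtained by using one coordinate of $A$ or $B$ as an auxiliary tag so that entropies decompose cleanly. The base distribution $q_0$ should satisfy the Zhang--Yeung hypothesis $I(A:B)=I(A:B|C)=0$ exactly, whereas $q_1$ should be chosen so that $I(C:D)-I(C:D|A)-I(C:D|B)$ is positive and as large as possible (which is possible for $q_1$ because it is not required to satisfy the Zhang--Yeung hypothesis). With the tagged construction, the mixture's penalty $P(p_t)=I(A:B)+I(A:B|C)$ is controlled at scale $O(t\log(1/t))$ coming from the tag bit and the breaking of conditional independence, while $L(p_t)$ picks up a linear-in-$t$ contribution from the violation carried by $q_1$; after dividing by $t$, the ratio $L(p_t)/P(p_t)$ blows up. For (b) and (c) the same scheme applies, with $q_0$ replaced by a distribution satisfying the hypothesis of Theorem~\ref{th-cond-ineq} (respectively of~(\ref{cond-ineq-matus})), and with the perturbation direction tuned to make only the \emph{specific} penalty terms appearing in~(\ref{false-conj-2}) or~(\ref{false-conj-3}) vanish.

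The main obstacle is precisely this \emph{scale separation}: for a generic perturbation both $L(p_t)$ and $P(p_t)$ scale as $\Theta(t)$, and then a large enough $\kappa$ would rescue the inequality. What must be demonstrated is the existence of a perturbation direction in which the penalty vanishes \emph{to higher order} than the gap. Such directions exist exactly because the inequalities of Zhang--Yeung, Mat\'u\v{s}, and Theorem~\ref{th-cond-ineq} are non Shannon-type: their proofs genuinely use the hypothesis (via the conditional-independence ``copy trick'' and the Kullback--Leibler argument of Section~\ref{sec3}), not just a softened form of it. Producing explicit $q_1$'s with the required macroscopic Zhang--Yeung-type violation is best done via quasi-uniform distributions supported on algebraic or combinatorial structures (cosets of finite groups, incidence geometries, matroid configurations), which is the standard toolbox for exhibiting non-Shannon-type phenomena. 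Once the family $p_t$ is in place, verifying $L(p_t)\ge c>0$ and $P(p_t)\to 0$ reduces to a finite, if tedious, calculation of a handful of entropies and conditional mutual informations.
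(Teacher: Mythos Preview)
Your scale analysis is backwards. You propose $L(p_t)\sim t$ and $P(p_t)=O(t\log(1/t))$ and then claim the ratio $L/P$ blows up; but $t\log(1/t)$ is \emph{larger} than $t$ as $t\to 0$, so in fact $L(p_t)/P(p_t)\sim 1/\log(1/t)\to 0$, and the inequality $L\le\kappa P$ would hold for every $\kappa>0$ once $t$ is small. (You also write $\liminf L(p_t)>0$ at the outset and then say $L$ is linear in $t$; these are inconsistent.) What is actually needed is $P(p_t)=o(L(p_t))$, for instance $L\sim\varepsilon$ and $P=O(\varepsilon^2)$.

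More seriously, the tagged-mixture scheme runs into a circularity that you flag but do not escape. Suppose the tag sits in $A$ and the marginals are matched so that $I(T:B)=I(T:B|C)=0$. Then $P(p_t)=I(A:B)+I(A:B|C)=(1-t)\cdot 0 + t\bigl(I_{q_1}(A':B)+I_{q_1}(A':B|C)\bigr)$, which is $\Theta(t)$ unless $q_1$ itself satisfies $I(A':B)=I(A':B|C)=0$. But any $q_1$ satisfying that hypothesis obeys the Zhang--Yeung conditional inequality, so it contributes no positive gap, and $L(p_t)\le 0$. The same obstruction arises for (b) and (c). Thus a generic convex combination cannot produce the required order separation; the appeal to ``non-Shannon-type'' is a heuristic, not a construction.

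The paper's proof avoids this by building, for each part, a single explicit one-parameter family in which several of the right-hand terms vanish \emph{identically} (e.g.\ in (a), $I(C:D|A)=I(C:D|B)=I(A:B|C)=0$ for every $\varepsilon$), and the remaining penalty vanishes to \emph{second} order ($I(A:B)=O(\varepsilon^2)$) while the gap $I(C:D)$ is $\Theta(\varepsilon)$. The second-order vanishing comes from a carefully placed perturbation atom that leaves the first derivative of the penalty zero, not from mixing with a hypothesis-satisfying base. Your proposal is missing precisely this construction.
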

\begin{proof}
(a) For all $\varepsilon\in[0,1]$ we us consider the following  joint distribution of binary variables 
$(A,B,C,D)$:
$$
\begin{array}{rcc}
 \prob[A=0,\ B=0,\ C=0,\  D=1] &=&  (1-\varepsilon)/4, \\
 \prob[A=0,\ B=1,\ C=0,\  D=0] &=& (1-\varepsilon)/4, \\
 \prob[A=1,\ B=0,\ C=0,\  D=1] &=& (1-\varepsilon)/4,  \\
 \prob[A=1,\ B=1,\ C=0,\  D=1] &=&(1-\varepsilon)/4,  \\
 \prob[A=1,\ B=0,\ C=1,\  D=1] &=& \varepsilon. 
\end{array}
$$
For each value of $A$ and for each values of $B$, the value
of at least one of variables $C,D$ is uniquely determined: if $A=0$ then $C=0$;
if $A=1$ then $D=1$; if $B=0$ then $D=1$; and if $B=1$ then $C=0$. Hence,
$I(C:D|A)=I(C:D|B)=0$.  Also it is easy to see that $I(A:B|C)=0$. Thus, if~(\ref{ineq-conjecture-mmrv})
is true, then
$
I(C:D) \le \kappa I(A:B).
$

Denote the right-hand and left-hand sides of this inequality by $L(\varepsilon)=I(C:D)$ and 
$R(\varepsilon)=\kappa I(A:B)$. Both  functions $L(\varepsilon)$ and $R(\varepsilon)$
are continuous, and $L(0)=R(0)=0$ (for $\varepsilon = 0$
both sides of the inequality are equal to $0$). However the asymptotics of 
$L(\varepsilon)$ and $R(\varepsilon)$ as $\varepsilon\to 0$ are different:
it is not hard to check that  
$L(\varepsilon) = \Theta(\varepsilon)$, but  $R(\varepsilon) = O(\varepsilon^2)$.
From~(\ref{ineq-conjecture-mmrv}) we have
$
\Theta(\varepsilon) \le O(\varepsilon^2),
$
which is a contradiction.

\medskip

(b) For every value of $\varepsilon\in[0,1]$ we consider 
the following  joint distribution of binary variables $(A,B,C,D)$:
$$
\begin{array}{rcc}
 \prob[A=1,\  B=1,\ C=0,\ D=0] &=& 1/2 -\varepsilon,\\
 \prob[A=0,\  B=1,\ C=1,\ D=0] &=& \varepsilon,\\
 \prob[A=1,\  B=0,\ C=1,\ D=0] &=& \varepsilon, \\
 \prob[A=0,\  B=0,\ C=1,\ D=1] &=& 1/2 -\varepsilon.
\end{array}
$$
The argument is similar to~the proof if~(a).
First, it is not hard to check that $I(C:D|A)=I(C:D|B)=H(C|AB)=0$
for every $\varepsilon$. Second, 

 $$
 \begin{array}{rcl}
 I(A:B)&=& 1 + (2-2/\ln 2)\varepsilon + 2\varepsilon\log \varepsilon + O(\varepsilon^2),\\
 I(C:D)&=& 1 +(4-2/\ln 2)\varepsilon + 2\varepsilon\log \varepsilon  + O(\varepsilon^2),
 \end{array}
 $$
so $I(A:B)$ and $I(C:D)$ both tend to $1$ as $\varepsilon\to 0$, but their asymptotics are different. 
Similarly, 
 $$
 I(A:B|C) = O(\varepsilon^2).
$$
It follows from~(\ref{false-conj-2}) that
$$
 2\varepsilon +O(\varepsilon^2) \le 
  O(\varepsilon^2) +  O(\kappa \varepsilon^2),
$$
and with any $\kappa$ we get a contradiction for small enough $\varepsilon$.

(c) For the sake of contradiction we consider the following  joint distribution of binary variables $(A,B,C,D)$
for every value of $\varepsilon\in[0,1]$:
$$
\begin{array}{rcc}
 \prob[A=0,\ B=0,\ C=0,\  D=0] &=& 3\varepsilon,\\
 \prob[A=1,\ B=1,\ C=0,\  D=0] &=& 1/3-\varepsilon,\\
 \prob[A=1,\ B=0,\ C=1,\  D=0] &=& 1/3-\varepsilon, \\
 \prob[A=0,\ B=1,\ C=0,\  D=1] &=& 1/3 -\varepsilon.
\end{array}
$$
We substitute this distribution in~(\ref{false-conj-3}) and obtain
$$
I_0+O(\varepsilon) \le I_0+ 3\varepsilon \log \varepsilon + O(\varepsilon)+ O(\kappa \varepsilon),
$$
where $I_0$ is the mutual information between $C$ and $D$ for $\varepsilon=0$ (which is 
equal to the mutual information between $A$ and $B$ for $\varepsilon=0$).
We get a contradiction as $\varepsilon\to 0$ .
\end{proof}

\section{Constraint inequality for Kolmogorov complexity} \label{sec-kolmogorov}

Kolmogorov complexity of a finite binary string $X$ is defined
as the length of the shortest program that generates $X$; 
similarly, Kolmogorov complexity of a string $X$ given 
another string $Y$ is defined as the length of the shortest 
program that generates $X$ given $Y$ as an input. More
formally, for any programming language $L$,
Kolmogorov complexity $K_L(X|Y)$ is defined as
$$
 K_L(X|Y) = \min\{ |p| \ :\ \mbox{program $p$ prints $X$ on input $Y$} \},
$$
and unconditional complexity $K_L(X)$ is defined as complexity of $X$
given the empty $Y$.
The basic fact of Kolmogorov complexity theory is
the invariance theorem: there exists a \emph{universal}
programming language $U$ such that for any other language
$L$ we have
$
K_U(X|Y) \le K_L(X|Y)+ O(1)
$
(the $O(1)$ depends on $L$ but not on $X$ and $Y$).
We fix such a universal language $U$; in what
follows we omit the subscript $U$ and denote Kolmogorov
complexity by $K(X)$, $K(X|Y)$. We refer the reader to an excellent 
book \cite{lv} for a survey of properties of Kolmogorov complexity.

Kolmogorov complexity was introduced in \cite{kol} as an
algorithmic version of measure of information in an individual
object. In some sense, properties of Kolmogorov complexity
are quite similar to properties Shannon's entropy. For example,
for the property of Shannon's entropy $H(A,B) =  H(A)+ H(B|A)$
there is a Kolmogorov's counterpart
 \begin{eqnarray}
 K(A,B) = K(A) + K(B|A)+O(\log K(A,B))\label{kl}
 \end{eqnarray}
(the Kolmogorov--Levin theorem, \cite{zl}). This result justifies 
the definition of the mutual information, which is an algorithmic
version of the standard Shannon's definition:
the mutual information is defined as
$
 I(A:B) := K(A) + K(B) - K(A,B),
$
and the conditional mutual information is defined as
$$ 
 I(A:B|C) := K(A,C) + K(B,C) - K(A,B,C) - K(C).
$$
From the Kolmogorov--Levin theorem it follows that $I(A:B)$ is equal
to $K(A)-K(A|B)$, and the conditional mutual
information $I(A:B|C)$ is equal to $K(A|C)-K(A|B,C)$
(all these equations hold only up to logarithmic terms).

In fact, we have a much more deep and general parallel between Shannon's and Kolmogorov's
information theories;
for  every linear inequality for Shannon's entropy there exists 
a Kolmogorov's counterpart: 
\begin{theorem}[\cite{hrsv}]\label{th-hrsv}
 For each family of coefficients $\{\lambda_W\}$ the inequality
  $$
  \sum\limits_{i}\lambda_i H(\alpha_i) +\sum\limits_{i<j}\lambda_{ij} H(\alpha_i,\alpha_j)+\ldots \ge 0
  $$
is true for every distribution $\{\alpha_i\}$  if and only if for some constant $C$
the inequality
  $$
  \sum\limits_{i}\lambda_i K(a_i) +\sum\limits_{i<j}\lambda_{ij} K(a_i,a_j)+\ldots C\log N\ge 0
  $$
is true for all tuples of strings $\{a_i\}$, $N=K(a_1,a_2,\ldots)$ ($C$ does not depend on $a_i$).
 \end{theorem}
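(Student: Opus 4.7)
The plan splits into two implications, each built on a dictionary between entropy profiles of distributions and complexity profiles of tuples of strings.

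For the easy direction, Kolmogorov $\Rightarrow$ Shannon, I would fix a finite distribution $\{\alpha_i\}$ and draw $N$ i.i.d.\ samples, letting $a_i$ be the string that lists the $N$ realisations of the $i$-th coordinate. For a generic sample every projection satisfies $K(a_I) = N\cdot H(\alpha_I) + O(\log N)$: the upper bound is obtained by describing the empirical type plus the index of the tuple in its type class, and the lower bound follows from the counting bound on strongly typical sets. A union bound over the $2^n-1$ subsets $I$ produces a single sample on which all these estimates hold simultaneously. Substituting them into the hypothesised Kolmogorov inequality, dividing by $N$, and sending $N \to \infty$ yields the Shannon inequality.

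For the harder direction, Shannon $\Rightarrow$ Kolmogorov, I would use a typization argument. Given a tuple $(a_1,\ldots,a_n)$, set $N=K(a_1,\ldots,a_n)$ and $k_I=K(a_I)$. The key combinatorial lemma is the existence of a finite set $S$ of $n$-tuples, containing $(a_1,\ldots,a_n)$, whose projection $S_I$ onto each coordinate block $I$ has cardinality $|S_I|=2^{k_I+O(\log N)}$. The uniform distribution on $S$ then satisfies $H(\alpha_I)=\log|S_I|=k_I\pm O(\log N)$ for every $I$, so applying the hypothesised Shannon inequality to this distribution delivers the corresponding Kolmogorov inequality with the asserted $C\log N$ slack.

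The main obstacle is this combinatorial lemma. Upper bounds $|S_I| \le 2^{k_I + O(\log N)}$ follow from the standard counting bound that at most $2^{k+O(1)}$ strings have complexity below $k$, applied to the enumerable set of tuples whose profile is termwise dominated by $\{k_I\}$. Matching lower bounds for all $2^n-1$ projections simultaneously is the delicate part: one exploits the fact that $\{k_I\}$ is almost submodular, a consequence of the Kolmogorov--Levin chain rule~(\ref{kl}), to guarantee that a distribution realising this profile can exist at all, and then extracts the explicit set $S$ by an inductive construction over subsets of coordinates. Once the lemma is in place, packaging every $O(\log N)$ error into a single constant times $\log N$ is routine bookkeeping.
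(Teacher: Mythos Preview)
The paper does not prove Theorem~\ref{th-hrsv}; it is quoted with a citation to~\cite{hrsv} and used as a black box, so there is no in-paper argument to compare against. Your outline does, however, match the strategy of the cited source: the Kolmogorov~$\Rightarrow$~Shannon direction via i.i.d.\ sampling and the asymptotic equipartition estimate $K(a_I)=N\cdot H(\alpha_I)+O(\log N)$, and the Shannon~$\Rightarrow$~Kolmogorov direction via the ``typization'' lemma that builds a finite set $S\ni(a_1,\ldots,a_n)$ whose projections satisfy $\log|S_I|=K(a_I)\pm O(\log N)$, then applies the entropy inequality to the uniform measure on~$S$. Your identification of the hard step is also accurate: matching the lower bounds on all $|S_I|$ simultaneously is exactly where~\cite{hrsv} does the real work, leaning on near-submodularity from Kolmogorov--Levin. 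As a sketch this is faithful; just be aware that ``inductive construction over subsets of coordinates'' hides a nontrivial argument that in~\cite{hrsv} proceeds through an explicit probabilistic/enumerative construction rather than a clean induction, so if you were to write this out in full you would need to supply that machinery.
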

Thus, the class of unconditional inequalities valid for Shannon's entropy coincides
with the class of (unconditional) inequalities valid for Kolmogorov complexity. 
What about conditional inequalities?

In the framework of Kolmogorov complexity we cannot say that some information quantity
\emph{exactly} equals zero. Indeed, even the definition of Kolmogorov complexity 
makes sense only up to an additive term that depends on the choice of the universal
programming language. Moreover, such a natural basic statement as the Kolmogorov--Levin
theorem~(\ref{kl}) holds only up to a logarithmic term. So, if we want to prove a sensible conditional
inequality for Kolmogorov complexity, the linear constraints must be formulated with some 
reasonable precision.
A natural version of Theorem~\ref{th-zy97} is the following conjecture:
\begin{conjecture}\label{conjecture-2} There exist functions $f(n)$ and $g(n)$ such that 
$f(n)=o(n)$ and $g(n)=o(n)$, and  for
all strings $A, B, C, D$ satisfying   
 $I(A:B|C) \le f(N),\   I(A:B) \le f(N)$
it holds
$
I(C:D)\le I(C:D|A) + I(C:D|B) + g(N)
$
(where $N=K(A,B,C,D)$).
\end{conjecture}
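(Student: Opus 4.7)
The plan is to mirror the proof of Theorem~\ref{th-zy97} in the algorithmic setting. A natural first attempt appeals to Theorem~\ref{th-hrsv}: given a tuple $A,B,C,D$ with joint complexity $N$, its complexity profile can be matched by the entropy profile of some joint distribution up to $O(\log N)$ errors. The hypotheses $I(A:B\mid C)\le f(N)$ and $I(A:B)\le f(N)$ then say that, on the distribution side, the corresponding mutual-information quantities are at most $f(N)+O(\log N)$. If the Shannon inequality of Theorem~\ref{th-zy97} degraded linearly under the relaxation of its constraints, Conjecture~\ref{conjecture-2} would follow immediately.

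However, Theorem~\ref{th-not-unconditional}(a) rules out exactly this kind of naive continuity argument. Even for Shannon entropy one can have $I(A:B)=\Theta(\varepsilon^2)$ while $I(C:D)-I(C:D|A)-I(C:D|B)=\Theta(\varepsilon)$, so no fixed multiple of the slack can bound the error. A proof of Conjecture~\ref{conjecture-2} must therefore use the small-slack hypothesis in a more structural way. My next attempt would mimic the ``enforcement of conditional independence'' step from the proof of Theorem~\ref{th-cond-ineq}: construct auxiliary strings $\tilde A, \tilde B$ (with $\tilde C=C$ and $\tilde D=D$) by drawing $\tilde A$ uniformly from the set $\{a : K(a\mid C,D)\le K(A\mid C,D)+O(\log N)\}$ and $\tilde B$ analogously and independently, thereby realising a combinatorial analog of the distribution $\tilde p$ in the second proof of Theorem~\ref{th-cond-ineq}. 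With care, one shows that the complexity profiles of $(\tilde A,C,D)$ and $(\tilde B,C,D)$ match those of $(A,C,D)$ and $(B,C,D)$ up to $o(N)$, and that $\tilde A,\tilde B$ are algorithmically independent given $(C,D)$ up to the same precision. Then a routine algorithmic entropy count, parallel to the one in Section~\ref{sec3}, would yield the desired inequality.

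The main obstacle will be the algorithmic counterpart of the ``elimination of conditional entropy'' step. In the Shannon setting, $I(A:B\mid C)=0$ forces compatible pairs $(a,b)$ to be compatible only through a shared $c$, and combined with $I(A:B)=0$ this imposes a rigid combinatorial structure that collapses $H(\tilde C\mid \tilde A,\tilde B)$ to zero. In the Kolmogorov world, the analogous statements hold only on a ``typical'' portion of the ambient set, and the exceptional portion could a priori inflate $K(\tilde C\mid\tilde A,\tilde B)$ by $\Omega(N)$ unless one argues very carefully about the distribution of weight across $(a,b)$ pairs. I expect this is precisely the difficulty that forces the authors to content themselves with a Kolmogorov statement valid only for ``special tuples of words'': proving the full conjecture appears to require a genuinely new algorithmic tool — an approximate-independence-to-approximate-functional-dependence lemma that loses only $o(N)$ — and it is not clear whether such a tool can live entirely within the $o(N)$-slack regime demanded by Conjecture~\ref{conjecture-2}.
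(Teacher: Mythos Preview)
Your diagnosis is on target, but note first that the paper does \emph{not} prove Conjecture~\ref{conjecture-2}; it is stated as an open question. Immediately after the statement the authors observe that one cannot even hope for $f(n),g(n)=\Theta(\log n)$: the family from Theorem~\ref{th-not-unconditional}(a), transplanted to strings, yields tuples with $I(A:B)$, $I(A:B|C)$, $I(C:D|A)$, $I(C:D|B)$ all $O(\log N)$ while $I(C:D)=\Omega(\sqrt{N\log N})$. This is exactly the obstruction you flag via Theorem~\ref{th-not-unconditional}(a), and your conclusion---that a linear-in-slack continuity argument through Theorem~\ref{th-hrsv} cannot work---matches the paper.

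Your second paragraph also matches what the paper actually does in lieu of a general proof. Proposition~\ref{prop-kolmogorov} treats the special triple where $C$ is a line over $\mathbb{F}_{2^n}$ and $A,B$ are incident points with maximal joint complexity. The argument is precisely your ``clone'' idea: one samples $A'$ and $B'$ independently from the respective typical sets conditional on $(C,D)$, obtaining the algorithmic analogue of $\tilde p$. The crucial step you identify as the obstacle---controlling $K(\tilde C\mid \tilde A,\tilde B)$---is handled in the special case by pure geometry: the construction forces $A'\neq B'$ (they lie in disjoint halves $P_0,P_1$ of the plane), so two distinct points determine the line and $K(C\mid A',B')=O(1)$ outright. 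No approximate-functional-dependence lemma is needed because exact functional dependence is built in.

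So your proposal is not a proof of the conjecture, but neither is anything in the paper; your analysis of where the difficulty lies and why the authors retreat to a structured special case is accurate.
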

There is no hope to prove Conjecture~\ref{conjecture-2} with $f(n)$ and $g(n)$ of order $\Theta(\log n)$.
Indeed, using a counterexample
from the proof of Theorem~\ref{th-not-unconditional}(a), we can construct binary strings
$A, B, C, D$ such that the quantities $I(A:B|C)$, $I(A:B)$, $I(C:D|A)$, and $I(C:D|B)$
are bounded by $O(\log N)$, but $I(C:D)=\Omega(\sqrt{N\log N})$.
However, even if Conjecture~\ref{conjecture-2} is false in general,  similar conditional 
inequalities (even with logarithmic precision) can be true for some special tuples $A,B,C,D$. In what follows we
show how to prove such an inequality for one natural example of strings $A,B,C$ (and any $D$).

Let $\mathbb{F}_n$ be the finite field of $2^n$ elements.
We consider the affine plane over $\mathbb{F}_n$. Let $C$
be random line in this plane, and $A$ and $B$ be two points
incident to this line. To specify the triple $\langle A,B,C\rangle$
we need at most $4n+O(1)$ bits of information: a line in a plane
can be specified by two parameters in $\mathbb{F}_n$;  to
specify each point in a given line we need additional $n$ bits 
of information.
 
We take a triple of strings $\langle A,B,C\rangle$ as specified above 
with maximal possible  Kolmogorov complexity, i.e., such that 
 $
 K(A,B,C)=4n + O(1)
 $
(it follows from a simple counting argument that such a triple
exists; moreover, there are about $2^{4n+O(1)}$ such triples).
For these $A$, $B$ and $C$ we can easily estimate all
their Kolmogorov complexities:
$$
\begin{array}{l}
K(A),\ K(B),  \mbox{ and }K(C) \mbox{ are equal to } 2n+O(1),\\
K(A,C) = 3n+O(1),\ K(B,C)=3n+O(1),\\
H(A,B)=4n+O(1).
\end{array}
$$

For this triple of strings  the quantities $I(A:B)$ and $I(A:B|C)$
are negligible (logarithmic). This condition is very
similar to the condition on random variables
$A,B,C$ in Theorem~\ref{th-zy97}. So, it is 
not very surprising that Kolmogorov's counterpart of 
Theorem~\ref{th-zy97} holds for these strings:

\begin{proposition}\label{prop-kolmogorov}
For the strings $A,B,C$ defined above and for \emph{all} 
strings $D$ we have
 $$
 I(C:D) \le I(C:D|A) + I(C:D|B) + O(\log N),
 $$
where $N=K(A,B,C,D)$. 
\end{proposition}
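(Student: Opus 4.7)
The plan is to reduce the inequality to a combinatorial statement about lines in $\mathbb{F}_n^2$ and then prove that statement by counting.

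First, I would unfold $I(C:D)$, $I(C:D|A)$, and $I(C:D|B)$ via the Kolmogorov--Levin theorem~(\ref{kl}) into basic complexities of the various subtuples. Substituting the specific values $K(A) = K(B) = K(C) = 2n + O(1)$ and $K(A,C) = K(B,C) = 3n + O(1)$ known for the generic triple, all terms not involving $D$ cancel, and the target inequality becomes equivalent to
\[
K(C|A, D) + K(C|B, D) \le K(C|D) + O(\log N).
\]

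Next, I would translate this into combinatorics. Let $\mathcal{L}_D$ denote the set of lines $L \subset \mathbb{F}_n^2$ with $K(L|D) \le K(C|D)$. Then $C \in \mathcal{L}_D$ and $|\mathcal{L}_D| \le 2^{K(C|D)+O(1)}$ by the standard enumeration bound. For a point $p$, let $n_p$ count the lines of $\mathcal{L}_D$ through $p$; enumerating these lines yields $K(C|p, D) \le \log n_p + O(\log N)$. The proposition thus reduces to the combinatorial estimate
\[
n_A \cdot n_B \le |\mathcal{L}_D|\cdot 2^{O(\log N)}.
\]

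For this estimate I would combine three ingredients: the incidence identity $\sum_{p\in C} n_p \le 2^n + |\mathcal{L}_D|$, coming from the fact that any line $L\ne C$ meets $C$ in at most one point; the trivial bounds $n_p \le \min(|\mathcal{L}_D|, 2^n + 1)$; and a typicality bound of the form $n_p \le 2^{I(p:D|C) + O(\log N)}\cdot \max(1, |\mathcal{L}_D|/2^n)$ for $p\in C$, obtained by noting that the set of points with large $n_p$ is $(C,D)$-computable and of controlled size, together with the identity $K(p|C,D) = n - I(p:D|C) + O(\log N)$. The genericity of $(A,B)$ on $C$ — specifically $K(A|C) = K(B|C) = n+O(1)$ and $I(A:B|C) = O(\log N)$ — together with the identity $K(A, B|D) = K(C|D) + K(A, B|C, D) + O(\log N)$ (valid because $K(C|A, B) = O(\log N)$) would control the relevant mutual informations and close the argument by a case analysis on how $|\mathcal{L}_D|$ compares with $2^n$ and on the correlations between $D$ and the points $A, B$.

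The hard part is this case analysis. When $\mathcal{L}_D$ has special algebraic structure — for example, when $D$ essentially fixes the slope of a pencil of lines — the typicality bound alone is too weak, and one must fall back on the geometric bound $n_p \le 2^n+1$ or the trivial bound $n_p \le |\mathcal{L}_D|$. The subtlety lies in showing uniformly, across all correlation patterns between $D$ and $(A,B,C)$, that at each of $A$ and $B$ at least one of the available bounds is tight enough to yield the required product estimate.
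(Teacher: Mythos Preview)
Your reduction to the inequality $K(C\mid A,D)+K(C\mid B,D)\le K(C\mid D)+O(\log N)$ is correct, and the translation into the line-count quantities $n_A,n_B,|\mathcal L_D|$ is sound. The gap is in the combinatorial core: the three tools you list --- the incidence identity $\sum_{p\in C}n_p\le 2^n+|\mathcal L_D|$, the geometric bound $n_p\le 2^n+1$, and the resulting ``typicality'' bound $n_p\lesssim (2^n+|\mathcal L_D|)\cdot 2^{-K(p\mid C,D)}$ --- are \emph{not} sufficient to prove $n_A\cdot n_B\le |\mathcal L_D|\cdot N^{O(1)}$. Take $D=(a_1,b_1,c_1)$, the pair of first coordinates of $A,B$ together with the slope of $C$. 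Then $K(C\mid D)=n+O(1)$, so $|\mathcal L_D|\approx 2^n$, while $K(A\mid C,D)=K(B\mid C,D)=O(1)$. All three of your bounds yield only $n_A,n_B\le 2^n$, so the best product bound you can extract is $n_A n_B\le 2^{2n}$, which misses the target $2^n$ by a factor of $2^n$. (In reality $n_A,n_B=N^{O(1)}$ here, because a line $L_s$ through $A$ satisfies $K(L_s\mid D)=K((s,c_2)\mid D)+O(1)\ge K(c_2\mid D)=n$, with equality only for $N^{O(1)}$ many slopes $s$; but this uses the specific algebraic structure of the pencil through $A$, not any of your three generic bounds.) Your case analysis on the size of $|\mathcal L_D|$ cannot close this, since the example sits exactly at $|\mathcal L_D|\approx 2^n$ and both $I(A\!:\!D\mid C)$ and $I(B\!:\!D\mid C)$ are maximal.

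The paper's proof avoids this entirely by \emph{not} trying to bound $n_A$ and $n_B$ separately. Instead it mimics the Shannon-entropy proof of Theorem~\ref{th-cond-ineq}: it replaces $A,B$ by ``clones'' $A',B'$ on the line $C$ that have the same complexity profiles relative to $C,D$ but are chosen with maximal joint complexity $K(A',B'\mid C,D)=K(A\mid C,D)+K(B\mid C,D)+O(\log N)$ (the Kolmogorov analogue of enforcing conditional independence). A one-bit geometric trick --- splitting the plane into halves $P_0,P_1$ so that $A'\in P_0$, $B'\in P_1$ --- guarantees $A'\ne B'$, hence $K(C\mid A',B')=O(1)$. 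Then the chain $K(A',B',C,D)\le K(D)+K(A'\mid D)+K(B'\mid D)+K(C\mid A',B')$ gives the inequality directly, with no incidence counting and no case analysis.
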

This statement can be proven by an 
argument similar to the proof of Theorem~\ref{th-cond-ineq}.
Let us explain this argument in full detail.
\begin{proof} 
We may identify $C$ with a linear function
$c_1x+c_2$ over $\mathbb{F}_n$, where $c_1$ and $c_2$
are elements of the field (since Kolmogorov complexity of $C$
is large, it cannot be a \emph{vertical} line on the plane). Further,
the points $A$ and $B$ in this line can be represented as pairs 
$\langle a_1,a_2\rangle$ and $\langle b_1,b_2\rangle$ such that
 $$
 c_1 \cdot a_1 +c_2=a_2\mbox{ and }
 c_1 \cdot b_1 +c_2=b_2
 $$
(here $a_i$ and $b_i$ are also elements of $\mathbb{F}_n$).
By assumption, complexity of the pair $(A,B)$ is close to $4n$.
It means that $A\not= B$; hence, $a_1\not= b_1$. Let $i$ be 
one of indexes such that the $i$th bits of $a_1$ and $b_1$ 
are different. W.l.o.g. we assume that the $i$th bit in $a_1$ is
equal to $0$ and the $i$th bit in $b_1$ is equal to $1$.

Now we split the affine plane over $\mathbb{F}_n$ into two 
halves: $P_0$ will consist of all points $(x,y)$ such 
that the $i$th bit of $x$ is $0$, and  $P_1$ will consist of the
points $(x,y)$ such  that the $i$th bit of $x$ is $1$. So, point
$A=(a_1,a_2)$ belongs to $P_0$, and $B=(b_1,b_2)$ belongs
to $P_1$.

Now we are going to variate the points $A$ and $B$: we will
substitute $A$ and $B$ by their `clones' $A'$ and $B'$ so that
the triples $\langle A', B' ,C\rangle$ remain ``similar'' to
the initial one $\langle A, B,C\rangle$. More precisely, we say
that $A'$ is a \emph{clone} of $A$ if 
\begin{itemize}
\item $A'=(a'_1,a'_2)$ is a point in line $C$, and $A'\in P_0$
(i.e., $c_1 \cdot a'_1 +c_2=a'_2$, and the $i$th bit of $a'_1$
is equal to $0$);
\item complexities $K(A')$, $K(A',C)$, $K(A',D)$, and $K(A',C,D)$
are equal (up to an additive term $O(\log N)$) to the corresponding
complexities  $K(A)$, $K(A,C)$, $K(A,D)$, and $K(A,C,D)$.
\end{itemize}
Similarly, we say that $B'$ is a \emph{clone} of $B$ if 
\begin{itemize}
\item $B'=(b'_1,b'_2)$ is a point in line $C$, and $B'\in P_1$, and
\item complexities $K(B')$, $K(B',C)$, $K(B',D)$, and $K(B',C,D)$
are equal (up to an additive term $O(\log N)$) to the corresponding
complexities  $K(B)$, $K(B,C)$, $K(B,D)$, and $K(B,C,D)$.
\end{itemize}
From a simple counting argument it follows that there exist
$2^{K(A|C,D)-O(\log N)}$ different clones of $A$ and $2^{K(B|C,D)-O(\log N)}$
clones of $B$ (see, e.g.,  \cite[Lemma~2]{hrsv}  or \cite[Lemmas~1--2]{muchnik}).

Let us take a pair of clones $A'$ and $B'$ with maximal complexity
given $(C,D)$. Then 
\begin{eqnarray*}
 \lefteqn{  K(A',B',C,D) =  } \\
 &&\lefteqn{ K(C,D) + K(A'|CD)+ K(B'|CD) + O(\log N)=}\\
 && K(C,D) + K(A|C,D) + K(B|C,D) + O(\log N)
\end{eqnarray*}
On the other hand,
\begin{eqnarray*}
 \lefteqn{  K(A',B',C,D)  \le K(D) +}\\
 &&{ K(A'|D)+ K(B'|D) + K(C|A',B')+O(\log N)}
\end{eqnarray*}
By definition of clones, complexities $K(A'|D)$ and $K(B'|D)$ are equal (up to $O(\log N)$ term)
to $K(A|D)$ and $K(B|D)$ respectively. Since $A'$ and $B'$ belong to $P_0$ and $P_1$
respectively, they cannot be equal to each other. Hence, $A'$ and $B'$ uniquely
determine line $C$. So, we get 
\begin{eqnarray*}
 \lefteqn{ K(C,D) + K(A|CD)+ K(B|CD) \le }\\
 &&K(D) + K(A|D) + K(B|D) + O(\log N),
\end{eqnarray*}
which is equivalent (by the Kolmogorov--Levin theorem) to
\begin{eqnarray*}
 I(C:D) \le I(C:D|A) + I(C:D|B) + O(\log N).
\end{eqnarray*}
\end{proof}

\section{ Acknowledgement }

This work was partially supported  by EMC ANR-09-BLAN-0164-01 and NAFIT ANR-08-EMER-008-01 grants. 

We thank anonymous referees 
of ISIT~2011
for useful comments that helped us to substantially rework the original manuscript. We are 
especially grateful to F.~Mat\'{u}\v{s}, who pointed out a mistake in the previous version if this paper.


\begin{thebibliography}{18}

\bibitem{husted} O.~Hustad.  Extension of Positive Linear Functionals. Mathematica Scandinavica, issue~11 (1962) pp.~63--78.
\bibitem{kol} A.N. Kolmogorov. Three approaches to the quantitative definition of information. 
Problems Inform. Transmission, 1(1):1--7, 1965.
\bibitem{zl} A.K. Zvonkin and L.A. Levin. The complexity of finite objects and the development 
of the concepts of information and randomness by means of the theory of algorithms. 
Russian Math. Surveys, 25(6):83--124, 1970.
\bibitem{zy97} Z. Zhang and R. W. Yeung. A non-Shannon-type conditional information inequality. IEEE
Transactions on Information Theory, 43(1997), pp. 1982--1986.
\bibitem{zy98} Z. Zhang and R. W. Yeung. On Characterization of entropy function via information inequalities. IEEE Transactions on Information Theory, 44(1998), pp. 1440--1450.
\bibitem{matus99} F.~Mat\'{u}\v{s}. Conditional independences among four random variables III: final conclusion. Combinatorics, Probability \& Computing, 8 (1999), pp.~269--276. 
\bibitem{mmrv} K.~Makarychev, Yu.~Makarychev, A.~Romashchenko, N.~Vereshchagin. A New Class of non-Shannon Type Inequalities for Entropies. Communications in Information and Systems. 2 (2002) No. 2, pp. 147--166.
\bibitem{matus} F.~Mat\'{u}\v{s}. Adhesivity of polymatroids. Discrete Mathematics 307, 2007,  2464--2477. 
\bibitem{matus-inf} F.~Mat\'{u}\v{s}. Infinitely many information inequalities. In 
IEEE ISIT 2007, pp.~41--44.
\bibitem{lv} M. Li and P. Vit\'anyi. An introduction to Kolmogorov complexity and its applications, 3rd Edition, Springer-Verlag, 2007.
\bibitem{hrsv} D. Hammer, A. Romashchenko, A. Shen, and N. Vereshchagin. Inequalities for Shannon
entropy and Kolmogorov complexity, Journal of Computer and Systems Sciences, 60(2000),
pp. 442--464.
\bibitem{six-inequalities} R. Dougherty, C. Freiling, and K. Zeger. Six new non-Shannon information 
inequalities. In IEEE ISIT 2006, pp.~233--236, 2006.
\bibitem{projections} W. Xu, J. Wang, and J. Sun. A projection method for derivation of 
non-Shannon-type information inequalities. In IEEE ISIT 2008, pp.~2116--2120, 2008.
\bibitem{network} T.~Chan,  A.~Grant. Dualities Between Entropy Functions and
Network Codes. IEEE transactions on information theory. 2008, vol. 54, no~10, pp.~4470--4487.
\bibitem{matus-piecewise} F.~Mat\'{u}\v{s}. Piecewise linear conditional information inequality. IEEE Transactions Information Theory, 52 (2008), pp.~236--238.
\bibitem{secret-sharing} A. Beimel and I. Orlov. Secret Sharing and Non-Shannon Information Inequalities. In Proc. of the 6th Theory of Cryptography Conference, volume 5444 of Lecture Notes in Computer Science, pp.~539--557, 
Springer-Verlag, 2009.
\bibitem{muchnik} An.~Muchnik, A.~Romashchenko. Stability of properties of Kolmogorov complexity under relativization. Problems of information transmission. vol. 46, no. 1, 2010. 
\bibitem{yeung-book} R.W.~Yeung. A first course in information theory. Norwell, MA/New York: Kluwer/Plenum, 2002.
\end{thebibliography}
\end{document}